\DeclareTextSymbolDefault{\textquotedbl}{T1}
\theoremstyle{plain}
\newtheorem{thm}{\protect\theoremname}
\theoremstyle{plain}
\newtheorem{lem}[thm]{\protect\lemmaname}
\theoremstyle{definition}
\newtheorem{defn}[thm]{\protect\definitionname}
\providecommand{\definitionname}{Definition}
\providecommand{\lemmaname}{Lemma}
\providecommand{\theoremname}{Theorem}
\begin{document}
\title{Communication complexity of pointer chasing via the fixed-set lemma}
\author{Emanuele Viola\thanks{Supported by NSF grant CCF-2430026.}}

\maketitle

The input to the $k$-pointer-chasing function are two arrays $a,b\in[n]^{n}$
of pointers and the goal is to output (say) the first bit of the pointer
reached after following $k$ pointers, starting at $a[0]$. For example,
for $i=0,1,2,3,\ldots$ the output of $i$-pointer chasing is the
first bit of $a[0]$,$b[a[0]]$, $a[b[a[0]]]$, $b[a[b[a[0]]]]$,
$\ldots$. The communication complexity of this fundamental problem
and its variants has a long, rich, and developing history, starting
with \cite{PaS84}, with applications ranging from data structures
to bounded-depth circuits; for background see \cite{KuN97,moti}.
Here I consider deterministic protocols with $k$ rounds and 2 players:
Alice, who sees $b$ but not $a$, and Bob who sees $a$ but not $b$.
(For randomized protocols see \cite{NiW93,DBLP:journals/cpc/Yehudayoff20,DBLP:conf/innovations/MaoYZ25}.)
In a $0$-round protocol Alice computes the answer with no communication
as a function of $b$. In a $1$-round, Alice sends a message $m$
to Bob who then computes the answer as a function of $a$ and $m$;
and so on. Alice always goes first.

\cite{NiW93} prove a $c(n-k\log n)$ lower bound on the communication
complexity. They then write that getting rid of the $-k\log n$ term
``requires a more delicate argument'' which they sketch. The textbook
\cite{RaoY2019} gives a proof.

I give an alternative proof, arguably simpler than the one in \cite{RaoY2019},
of an $n/8$ lower bound. This follows from the next theorem for $A=B=[n]^{n}$
and $F_{A}=F_{B}=\emptyset$. The bound holds for any $k$. In particular,
for $k=n/8$ the bound holds regardless of the number of rounds. I
write $[n]$ for $\{0,1,\ldots,n-1\}$ and for a set $A$ I write
$A$ for its size as well, following notation in \cite{moti}.
\begin{thm}
There is no $k$-round protocol with communication $s$ and sets $A,B\subseteq[n]^{n}$
and $F_{A},F_{B}\subseteq[n]$ such that:

(0) The protocol computes $k$-pointer-chasing on every input in $A\times B$,

(1) The $F_{A}$ pointers in $A$ are fixed, i.e., $\forall i\in F_{A}\exists v\forall a\in A,a[i]=v$,
and the same for $B$,

(2) The unfixed density of $A$, defined as $A/n^{n-F_{A}}$, is $\ge2^{2s-n/4+F_{A}}$,
and the same for $B$,

(3) $A[0]$ is alive, defined as $\mathbb{P}_{a\in A}[a[0]=v]<2/n$
for every $v\in[n]$.
\end{thm}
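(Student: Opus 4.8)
I would prove the theorem by induction on $k$, the inductive step being a round‑elimination argument driven by the fixed‑set lemma. It is cleanest to prove the slightly more general statement in which the chase starts at an arbitrary fixed index $i_0\in[n]$ rather than at $0$ (so (3) reads ``$A[i_0]$ is alive''), since the reduction will naturally produce other start indices; the theorem is the case $i_0=0$. For the base case $k=0$ we have $s=0$, so Alice must name the first bit of $a[i_0]$ while seeing only $b$. By (3) no value is taken by a $2/n$‑fraction of $A$, so $a[i_0]$ assumes more than $n/2$ values over $A$ and in particular both first bits occur, while $B\neq\emptyset$ by (2). Hence no bit depending only on $b$ is correct on all of $A\times B$, contradicting (0).

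\textbf{Eliminating a round.} For $k\ge 1$, suppose a $k$‑round protocol $\Pi$ with communication $s$ and sets $A,B,F_A,F_B$ satisfies (0)--(3); I will build a $(k-1)$‑round protocol with sets satisfying (0)--(3) at parameter $k-1$, contradicting the inductive hypothesis. First apply the fixed‑set lemma to $B$ with the coloring $g(b)\in\{0,1\}^{s_1}$ given by Alice's opening message: it should return $B_1\subseteq B$ and $F_B\subseteq F'_B\subseteq[n]$ with $g$ constant on $B_1$, the $F'_B$‑coordinates fixed on $B_1$, every coordinate outside $F'_B$ alive on $B_1$, and the unfixed density of $B_1$ measured against $n^{n-F'_B}$ at least $2^{2s-n/4+F'_B-s_1}\ge 2^{2(s-s_1)-n/4+F'_B}$. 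The two ingredients are that passing to the majority fiber of $g$ loses exactly the factor $2^{s_1}$, and that each subsequent restriction to a $2/n$‑heavy value of a not‑yet‑alive coordinate at least \emph{doubles} the unfixed density (size drops by a factor $\ge 2/n$, the ambient cube by $1/n$), which is precisely what converts the shift from $F_B$ to $F'_B$ into a gain rather than a loss. Since a density is at most $1$, this also forces $F'_B\le n/4$. Freezing Alice's message leaves a residual protocol $\Pi_1$ of $k-1$ rounds with Bob moving first and communication $s-s_1$.

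\textbf{Choosing the new start and repackaging.} Now pick $v^*\notin F'_B$ with $A_{v^*}:=\{a\in A: a[0]=v^*\}$ as large as possible. By (3) every $|A_v|<\tfrac2n|A|$, and $F'_B\le n/4$, so the indices in $F'_B$ together account for less than $|A|/2$ of $A$; hence some $v^*\notin F'_B$ has $|A_{v^*}|\ge|A|/2n$, and counting coordinate $0$ as a new fixed coordinate gives $|A_{v^*}|/n^{n-F_A-1}\ge\tfrac12\cdot|A|/n^{n-F_A}\ge 2^{2(s-1)-n/4+(F_A+1)}$. Imposing $a[0]=v^*$ and rerunning $\Pi_1$ now solves $(k-1)$‑pointer‑chasing started at index $v^*$ on the arrays $(\hat a,\hat b):=(b,a)$, since once $a[0]=v^*$ the chase $b[a[b[\cdots a[0]]]]$ is exactly the chase of $k-1$ pointers from $\hat a[v^*]$; this is a $(k-1)$‑round protocol with the right turn order (old‑Bob moves first), communicating $s-s_1\le s-1$ bits. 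Its sets are $B_1$ with fixed set $F'_B$ on the side of the player who does not move first — this yields (1), (2) from the density bound above, and (3) because $v^*\notin F'_B$ makes the start coordinate alive — and $A_{v^*}$ with fixed set $F_A\cup\{0\}$ on the side of the player who moves first, yielding (1) and (2) via $|A_{v^*}|/n^{n-F_A-1}\ge 2^{2(s-1)-n/4+(F_A+1)}\ge 2^{2(s-s_1)-n/4+(F_A+1)}$; (0) is immediate. This contradicts the inductive hypothesis at $k-1$.

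\textbf{Where the difficulty lies.} The hard part is making all of this bookkeeping close at once: the reduction must absorb the factor $2^{s_1}$ from deleting Alice's message, afford the extra factor $2$ in the density requirement incurred by promoting coordinate $0$ to the fixed set (this is exactly why (2) carries the $2^{2s}$ and the $n/4$ rather than $2^{s}$ and $n/2$), and retain $\ge 3n/4$ live coordinates outside $F'_B$, both to host $v^*$ and to prevent any value of $a[0]$ from swallowing too much of $A$. These costs balance against the reclaimed $s_1$ bits and the slack between $2^{2s}$ and $2^{2(s-1)}$ \emph{provided $s_1\ge 1$}; when Alice's first message is empty, $\Pi$ is really a $(k-1)$‑round protocol with Bob moving first and must be dispatched separately (eliminate Bob's first message instead, or invoke the $s=0$ case directly). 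Stating and proving the fixed‑set lemma itself — iterate ``restrict to a $2/n$‑heavy value of some non‑alive coordinate'', each step at least doubling the unfixed density and hence terminating within $\log_2(1/\text{density})$ steps, after first passing to a majority fiber of any given coloring — is routine; threading these inequalities cleanly through the reduction is the real work.
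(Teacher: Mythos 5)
Your proposal is correct and follows essentially the same round-elimination argument as the paper: pass to Alice's most likely message, repeatedly fix non-alive coordinates of $B$ (doubling the unfixed density each time) until all are alive, conclude $F'_B\le n/4$ from density $\le 1$, use aliveness of $A[0]$ to find a heavy value $v^*\notin F'_B$, and fix $a[0]=v^*$, with the same tracking of unfixed densities. The only cosmetic differences are that you generalize the start index to $i_0$ where the paper instead permutes coordinates at the end, and you explicitly flag the $s_1=0$ edge case that the paper leaves implicit in writing $t\ge 1$.
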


\begin{proof}
Proceed by induction on $k$, a.k.a.~round elimination. For the base
case $k=0$, we can fix B and hence Alice's output. But since $A[0]$
is alive, the probability that this is correct is $<(2/n)\cdot n/2<1$.

The induction step is by contrapositive. Assuming there is such a
protocol and there are such sets, we construct a $(k-1)$-round protocol
and sets violating the inductive assumption. Suppose Alice sends $t$
bits as her first message. Fix the most likely message, and let $B_{0}\subseteq B$
be the set of $\ge2^{-t}B$ corresponding strings. Next is the key
idea, taken from the proof of the fixed-set Lemma 3.14 from \cite{GrinbergSV-adaptivemajority}.
If there is a pointer $i\in[n]-F_{B}$ which is not alive in $B_{0}$,
fix it to its most likely value, call $B_{1}\subseteq B_{0}$ the
corresponding subset, and let $F_{B_{1}}:=F_{B}\cup\{i\}$. Note that
the unfixed density increases by a factor $2$ since
\[
\frac{B_{1}}{n^{n-F_{B_{1}}}}\ge\frac{2}{n}\frac{B_{0}}{n^{n-F_{B}-1}}=2\frac{B_{0}}{n^{n-F_{B}}}.
\]
Continue fixing until every unfixed pointer is alive, and call $B'$,
$F_{B'}$ the resulting sets. The unfixed density of $B'$ is then
\[
\frac{B'}{n^{n-F_{B'}}}\ge\frac{2^{-t}B}{n^{n-F_{B}}}2^{F_{B'}-F_{B}}\ge2^{2s-n/4+F_{B}-t+F_{B'}-F_{B}}=2^{2(s-t)-n/4+F_{B'}}.
\]

Now note that $F_{B'}\le n/4$ because the density cannot be larger
than $1$. We use this to analyze Alice's side. Because $A[0]$ is
alive, $\mathbb{P}_{a\in A}[a[0]\in F_{B'}]\le2F_{B'}/n\le1/2.$ So
there is an alive pointer $B'[i]$ such that $\mathbb{P}[a[0]=i]\ge1/(2n)$.
Let $A'\subseteq A$ be the corresponding subset with $a[0]=i$, and
let $F_{A'}:=F_{A}\cup\{0\}$. The unfixed density of $A'$ is
\[
\frac{A'}{n^{n-F_{A'}}}\ge\frac{A/2n}{n^{n-F_{A}-1}}=\frac{A/2}{n^{n-F_{A}}}\ge2^{2s-n/4+F_{A}-1}=2^{2s-n/4+F_{A'}-2}\ge2^{2(s-t)-n/4+F_{A'}}
\]
since $t\ge1$. This gives a $(k-1)$-round protocol where Bob goes
first that computes $(k-1)$-pointer-chasing where the first pointer
is $B'[i]$. We can swap players to let Alice go first and permute
pointers to let $A'[0]$ be the first pointer.
\end{proof}
I am grateful to the anonymous referees and Quan Luu for helpful discussions.

\bibliographystyle{alpha}
\bibliography{OmniBib}

\end{document}